\DeclareSymbolFont{newfont}{OML}{cmm}{m}{it}
\DeclareMathSymbol{\varrho}{3}{newfont}{37}
\def \rmd{\mathrm{d}}
\def \Ad{\mathrm{Ad}}
\def \vol{\mathrm{vol}}
\begin{document}

\title*{Epi-two-dimensional flow and generalized enstrophy}
\titlerunning{Epi-2D flow and generalized enstrophy}
\author{Zensho Yoshida and Philip J. Morrison}
\institute{Zensho Yoshida \at Department of Advanced Energy, University of Tokyo, Chiba 277-8561, Japan, \email{yoshida@ppl.k.u-tokyo.ac.jp}
\and Philip J. Morrison \at Department of Physics and Institute for Fusion Studies, University of Texas at Austin, TX 78712-1060, USA, \email{morrison@physics.utexas.edu}}
%
%
\maketitle

\abstract{
The conservation of the enstrophy ($L^2$ norm of the vorticity $\omega$)
plays an essential role in the physics and mathematics of two-dimensional (2D) Euler fluids.
Generalizing to compressible ideal (inviscid and barotropic) fluids, the generalized enstrophy
$\int_{\Sigma(t)} f(\omega/\rho)\rho\rmd^2 x$\, 
($f$ an arbitrary smooth function, $\rho$ the density, and $\Sigma(t)$ an arbitrary 2D domain co-moving with the fluid) 
is a constant of motion, and plays the same role.
On the other hand, for the  three-dimensional (3D) ideal fluid,
the helicity $\int_{M} \bm{V}\cdot\bm{\omega}\,\rmd^3x$\, 
($\bm{V}$  the flow velocity, $\bm{\omega}=\nabla\times\bm{V}$, and ${M}$  the three-dimensional domain containing the fluid) is conserved.
Evidently, the helicity degenerates in a 2D system, and the (generalized) enstrophy emerges as a compensating constant.
This transition of the constants of motion is a  reflection of an  essential difference between 2D and 3D systems,
because the conservation of the (generalized) enstrophy imposes stronger constraints, than the helicity, on the flow.
In this paper, we make a deeper inquiry into the helicity-enstrophy interplay: 
the ideal fluid mechanics is cast into a Hamiltonian form in the phase space of Clebsch parameters, 
generalizing 2D to a wider category of epi-2D flows  (2D  embedded in 3D has zero helicity, while the converse is not true -- our  epi-2D category encompasses a wider class of zero-helicity flows); how  helicity degenerates  and  is substituted by a new constant is delineated; and  how a further generalized enstrophy is introduced as a constant of motion applying to epi-2D flow is described. 
}

\section{Introduction}
The aim of this paper is to elucidate,
from  the perspective of Hamiltonian dynamics\,\cite{Morrison-RMP},
how two-dimensional (2D) flow is different from general three-dimensional (3D) flow.
Phenomenologically, 2D flow is often very different from 3D flow in that the former is less-turbulent
and is more capable of generating  and sustaining  large-scale vortical structures
-- typhoons, jet streams, polar vortexes being  spectacular examples of such structures created in atmospheric 2D flow.
If we could delineate the root cause of such special behavior in  2D, we might  be able to obtain a flow `intermediate' between 2D and 3D, where the `regularity' of 2D flow is maintained.  As we will show, such is indeed possible. 

We invoke the \emph{helicity} as the key parameter for characterizing the transition from 3D to 2D (see Sec.\,\ref{sec:preliminaries}).
As is well known, the helicity is a constant of motion in an `ideal' flow
(in this paper  {ideal} means inviscid and barotropic).
In 2D geometry, however, the helicity degenerates to zero; but 
as a compensation, the \emph{enstrophy} (or its generalization, cf.\  Remark\,\ref{remark:enstrophy})
becomes a nontrivial constant (see Secs.\,\ref{subsec:2D} and \ref{subsec:enstrophy}).
The conservation of the (generalized) enstrophy is a  most essential property for 
distinguishing 2D from 3D.
The enstrophy is a higher-order functional in comparison with the helicity,
and its conservation is deemed to be reason for the aforementioned difference between 2D and 3D systems.
Even when the constancy of the enstrophy or the helicity is broken by the inclusion of  dissipation,
the macroscopic structure of the fluid system is strongly influenced by these ideal constants of motion (cf.\,\cite{Hasegawa1985}).

Needless to say, zero-helicity flow is not necessarily 2D.
In Sec.\,\ref{sec:epi-2D}, we introduce our category of `epi-2D' flow that maintains the basic properties
of zero-helicity and enstrophy conservation, while not necessarily being 2D.  Having cast  ideal fluid mechanics into a Hamiltonian form  in the phase space of Clebsch parameters (e.g.\ \cite{Clebsch,Zakharov,Marsden1983,Jackiw,Yoshida2009}), the
category of epi-2D flow is, then, defined as a reduction of the phase space.
One of the reduced parameter  used is  a  \emph{phantom}\,\cite{YoshidaMorrisonFDR2014,YoshidaMorrison2016},
by which we define a generalized enstrophy.  
In Sec.\,\ref{sec:particle} we introduce the notion of an epi-2D particle to elucidate our theory.

\section{Preliminaries}
\label{sec:preliminaries}

\subsection{Three-dimensional fluid mechanics}
\label{subsec:3D}

We start by reviewing the basic equations of fluid mechanics and the associated  conservation laws.
Here we use the conventional notation of 3D vector analysis, with vector fields  denoted by bold-face symbols.

Let ${M}$ be a 3D domain containing an ideal fluid.
For simplicity, we assume ${M}=T^3$, the 3-torus, and ignore the effect of boundaries.
We denote by ${\rho}$ the mass density, $\bm{V}$ the fluid velocity,
and $P$ the pressure.
Thus, the  governing equations are 
\begin{eqnarray}
& & \partial_t{\rho} = -\nabla\cdot(\bm{V} {\rho}),
\label{mass_conservation}
\\
& & \partial_t \bm{V}=-(\bm{V}\cdot\nabla)\bm{V} - \rho^{-1}\nabla P.
\label{momentum}
\end{eqnarray}
Assuming  a barotropic relation $P=P(\rho)$, with  $ \rho^{-1}\nabla P = \nabla h$ for  $h=h(\rho)$, 
the energy of the system is
\begin{equation}
H = \int_{M} \left[ \frac{1}{2} |\bm{V}|^2 + \varepsilon(\rho)  \right]\,\rho\,\rmd^3 x,
\label{energy}
\end{equation}
where $\varepsilon(\rho)$ is the specific thermal energy, satisfying $\partial(\rho\epsilon(\rho))/\partial\rho = h(\rho)$.  
Evidently, $\rmd H/\rmd t=0$.

The vorticity $\bm{\omega}=\nabla\times\bm{V}$ obeys the vorticity equation, obtained by 
taking the curl of (\ref{momentum}), i.e., 
\begin{equation}
\partial_t \bm{\omega}= \nabla\times(\bm{V}\times\bm{\omega}).
\label{vorticity-3D}
\end{equation}
Evidently, the total mass $N=\int_{M} \rho\,\rmd^3x$ is a constant of motion, along with 
another conserved quantity,  the helicity: 
\begin{equation}
C = \int_{M} \bm{V}\cdot\bm{\omega}\,\rmd^3 x.
\label{Helicity_naive}
\end{equation}
Using (\ref{momentum}) and (\ref{vorticity-3D}), we easily verify $\rmd C/\rmd t =0$.\footnote{
In this work, we do not argue for the the existence  of regular solutions of the model equations.
The conservation laws discussed are, therefore,  \emph{a priori} relations satisfied by all regular solutions if they exist. 
}

\subsection{Two-dimensional fluid mechanics}
\label{subsec:2D}

To compare 2D and 3D systems, it is convenient to immerse a 2D system into 3D space.
For simplicity, we consider a flat torus $T^2$, on which we define Cartesian coordinates $x$ and $y$.
We add a `perpendicular' coordinate $z$ and extend $T^2$ to $T^3$, with 
 $\bm{e}_z=\nabla z$, which we call the perpendicular vector.
Now we may define a 2D system by the reduction of the 3D system with $\bm{e}_z\cdot ~=0$ and $\partial_z=0$.
Indeed, a 2D fluid model is formulated by such a reduction.\footnote{
We note that $\partial_z=0$ does not mean that the system extends in the $z$-direction  homogeneously;
instead, we consider a thin system in which  variation  of physical quantities in the $z$-direction 
is much larger than in the $x$ and $y$ directions.  Thus,   $\partial_z$ can be separated from
$\partial_x$ and $\partial_y$. 
}
We interpret a 2D flow $\bm{v} = ( v_x, v_y)^{\mathrm{T}}$ as a special 3D flow such that $\bm{V}=( v_x, v_y, 0)^{\mathrm{T}}$.
The vorticity can be defined as $\bm{\omega}=\nabla\times\bm{V}=\omega \bm{e}_z$
with $\omega=\partial_x v_y - \partial_y v_x$. In which case the  vorticity equation (\ref{vorticity-3D}) reduces to  a single-component equation:
\begin{equation}
\partial_t {\omega}= -\nabla\cdot (\bm{v}\omega) .
\label{vorticity-2D}
\end{equation}

Because $\bm{V}\cdot\bm{\omega} = 0$, the helicity conservation is now trivial, $C\equiv0$; however, 
interestingly, a new constant emerges that  replaces the degenerated helicity.\footnote{
Fukumoto\,\cite{Fukumoto2008} points out that the helicity and the generalized enstrophy can be
unified by the concept of \emph{cross helicity}.
}
By (\ref{vorticity-2D}) and the mass conservation law,
which now reads $\partial_t\rho = -\nabla\cdot(\bm{v} \rho)$,
we obtain, the following equation for the \emph{potential vorticity}, $\vartheta=\omega/\rho$: 
\begin{equation}
\partial_t \vartheta = -\bm{v}\cdot\nabla \vartheta , 
\label{potential-vorticity}
\end{equation}
and we define the \emph{generalized enstrophy}  by 
\begin{equation}
{Q} = \int_{M} f(\vartheta) \rho\,\rmd^2 x,
\label{G-enstrophy}
\end{equation}
where $f$ is an arbitrary $C^1$-class function.
Using (\ref{potential-vorticity}) and the mass conservation law, we can easily verify that $\rmd {Q}/\rmd t=0$.

\begin{remark}
\label{remark:enstrophy}
For an incompressible flow ($\nabla\cdot\bm{v}=0$), we may assume $\rho=$ constant, and then,
(\ref{G-enstrophy}) has a special form of $\int_{M} \omega^2 \rmd^2x $,
which is the usual \emph{enstrophy}. 
\end{remark}

We end this introductory section with drawing attention to the fact that all constants of motion,
i.e. the total mass $N$, the helicity $C$, and the generalized enstrophy ${Q}$ are defined by
the spatial integrals over the 3D or 2D domain.
This means that the integrand of a constant of motion defines an $n$-form ($n$  the spatial dimension)
in the language of differential geometry.
In the following analysis, this fact guides our formulation of  generalized enstrophy.

\section{Topological invariants in fluid motion}
\label{sec:helicity}

\subsection{Hamiltonian formalism of ideal fluid motion}
\label{subsec:ideal_flow}

For the study of geometrical properties of fluid mechanics, we   reformulate the
governing equations in the framework of differential geometry.
We first introduce a \emph{phase space} $X$ that hosts the underlying state vectors $\bm{\xi}$;
the physical quantity $\bm{u} = (\rho, \bm{V})^{\textrm{T}}$ 
($\in\mathcal{V}$,  the space of physical variables) is some function parameterized by $\bm{\xi}$.
Let 
\begin{equation}
\bm{\xi}  =  (\varphi, \varrho, q, p, r, s)^{\mathrm{T}} ~\in X,
\label{canonical_variables}
\end{equation}
where $\xi_1=\varphi, \xi_3=q, \xi_5 =r$ are 0-forms and $\xi_2=\varrho, \xi_4=p, \xi_6=s$ are $n$-forms in the base space ${M}=T^3$.
We assume $\xi_j$ ($j=1,\cdots,6$) are smooth (i.e. $C^\infty$-class) functions.
The dual space $X^*$ is the Hodge-dual of $X$, i.e. the odd number components of $\bm{\eta}\in X^*$ are $n$-forms and the even number components are 0-forms.
The pairing of $X^*$ and $X$ is
\footnote{
Here the phase space (function space)  $X$
may be viewed as a cotangent bundle of $X_q =\{ (\xi_1, \xi_3, \xi_5)^{\mathrm{T}} ;\, \xi_j \in \mathrm{\Omega}^0(M)\}$. 
For $F\in C^\infty(X)$, $\partial_{\bm{\xi}} F\in X^*$ (to be defined in (\ref{gradient}))
may be regarded as a `1-form' on $X$.
Hence, the duality of $X^*$ and $X$ corresponds to that of `co-vectors' and `vectors'.
At the same time, the components of the field $\bm{\xi}\in X$ are differential forms (0-forms and n-forms) on the `base space' $M$;
the `Hodge-duality' of $X^*$ and $X$ is in the sense of the differential forms on $M$,
while the duality (\ref{pairing}) is in the sense of `co-vectors' and `vectors' on the function space $X$.
}
\begin{equation}
\langle \bm{\eta}, \bm{\xi} \rangle = \sum_{j} \int_{M} \eta_j \wedge \xi_j ,
\quad \bm{\eta}\in X^*, ~ \bm{\xi}\in X.
\label{pairing}
\end{equation}
On the space $C^\infty(X)$ of \emph{observables}, we define a canonical Poisson bracket
\begin{equation}
\{ F, G \} = \langle \partial_{\bm{\xi}} F, J \partial_{\bm{\xi}} G \rangle,
\label{canonical_Poisson_bracket}
\end{equation}
where $F, G \in C^\infty(X)$, $\partial_{\bm{\xi}} F$ is the gradient of $F$ defined by
\begin{equation}
F(\bm{\xi}+\epsilon \bm{\zeta}) - F(\bm{\xi}) = \epsilon \langle \partial_{\bm{\xi}} F, \bm{\zeta} \rangle + O(\epsilon^2)
\quad (\forall \bm{\zeta}\in X),
\label{gradient}
\end{equation}
and $J:\, X^* \rightarrow X$ is the symplectic operator 
\begin{equation}
J = J_c\oplus J_c\oplus J_c,
\quad 
J_c = \left( \begin{array}{cc}
0 & I \\
-I & 0
\end{array}\right).
\label{symplectic}
\end{equation}
We denote by $C^\infty_{\{~,~\}}(X)$ the Poisson algebra of observables on $X$.
The adjoint representation of  Hamiltonian dynamics is,
for a given Hamiltonian $H$,
\begin{equation}
\frac{\rmd}{\rmd t} F = \{ F, H \},
\label{adjoint}
\end{equation}
which is equivalent to Hamilton's equation of motion
\begin{equation}
\frac{\rmd}{\rmd t} \bm{\xi}= J \partial_{\bm{\xi}} H .
\label{Hamilton_eq}
\end{equation}

We relate the physical quantity $\bm{u}\in \mathcal{V}$ and $\bm{\xi}\in X$ by
$\rho\Leftrightarrow \varrho^*$
(i.e. $\varrho^* \vol^n = \varrho$ with the volume $n$-form $\vol^n$; here $n=3$),\footnote{Here we denote by $\alpha^*$ the Hodge dual of a differential form $\alpha$.}
and  
\begin{equation}
\bm{V} \Leftrightarrow    \wp = \rmd\varphi + \check{p} \rmd q + \check{s} \rmd r ,
\quad \left(\check{p}={p^*}/{\varrho^*}, ~\check{s}={s^*}/{\varrho^*} \right).
\label{Clebsch}
\end{equation}
Writing a vector as (\ref{Clebsch}) is called the  \emph{Clebsch parameterization}.
The five Clebsch parameters $(\varphi, q, \check{p}, r, \check{s})$ are sufficient to represent every 3-vector (1-form in 3D space)\,\cite{Yoshida2009}.
Inserting (\ref{Clebsch}) into the fluid energy (\ref{energy}), we obtain a Hamiltonian
\begin{equation}
H(\bm{\xi}) = \int_{M} \left[ \frac12 {\big|\rmd\varphi + (p^*/\varrho^*) \rmd q + (s^*/\varrho^*) \rmd r\big|^2} + \varepsilon(\varrho^*)  \right]\,\varrho.
\label{Hamiltonian}
\end{equation}
With this $H$, the equation of motion (\ref{Hamilton_eq}) reads
\begin{eqnarray}
& &\widetilde{\mathcal{L}}_{\bm{V}} \varphi = h-{V^2}/{2},
\label{H-1} \\
& & \widetilde{\mathcal{L}}_{\bm{V}} q = 0, \quad\widetilde{\mathcal{L}}_{\bm{V}} r = 0,
\label{H-2} \\
& &\widetilde{\mathcal{L}}_{\bm{V}} \varrho = 0, \quad\widetilde{\mathcal{L}}_{\bm{V}} p = 0, \quad \widetilde{\mathcal{L}}_{\bm{V}} s = 0 , 
\label{H-3} 
\end{eqnarray}
where we denote
\begin{equation}
\widetilde{\mathcal{L}}_{\bm{V}} = \partial_t + \mathcal{L}_{\bm{V}},
\label{extended_Lie_detivative}
\end{equation}
with $\mathcal{L}_{\bm{V}}$ being  the conventional Lie derivative along the vector $\bm{V}\in T{M}$.\footnote{
Here $\bm{V}$ is regarded as a vector $\in TM$ through the following identification.
By (\ref{Clebsch}), $\bm{V} \Leftrightarrow \wp \in T^*M$.
In the Hamiltonian (\ref{Hamiltonian}), $|\wp|^2 \Leftrightarrow \bm{V}^\dagger\cdot\bm{V}$ with the dual
$\bm{V}^\dagger=\bm{V}\in TM$.
}
\footnote{
When considering  a relativistic fluid, we generate a diffeomorphism group $\mathrm{e}^{\tau U}$
($\tau$ the  proper time),
acting on a space-time manifold $\tilde{M}=\mathbb{R}\times{M}$,
by a space-time velocity $U\in T\tilde{M}$.
Then, the space-time derivative $\widetilde{\mathcal{L}}_{\bm{V}}$ is replaced by the
natural Lie derivative $\mathcal{L}_U$.
When  $\mathcal{L}_U$ applies to a differential form $\alpha$ on $\tilde{M}$,
the temporal and spatial components are mixed up (cf.\,\cite{YKY2014}). }

The first equation of (\ref{H-3}) is nothing but the mass conservation law (\ref{mass_conservation}).
Evaluating $\partial_t\bm{V}$ by inserting (\ref{Clebsch}) and using (\ref{H-1})-(\ref{H-3}), we
obtain (\ref{momentum}).
Hence, Hamilton's equation (\ref{Hamilton_eq}) with the Hamiltonian (\ref{Hamiltonian})
describes the fluid motion obeying (\ref{mass_conservation}) and (\ref{momentum}).\footnote{
See \cite{Lin} for the underlying action principle that yields the canonical system of Hamilton's
equation (\ref{H-1})--(\ref{H-3}).
}

\subsection{Gauge symmetry and helicity}
\label{subsec:gauge}

In (\ref{Clebsch}), the Clebsch parameters are apparently a redundant representation of  a 3-vector $\bm{V}$.
In fact, the map $X \rightarrow \mathcal{V}$ is not an injection (although a surjection)\,\cite{Yoshida2009}.
For example, the transformation
\begin{equation}
\varphi \mapsto \varphi + \epsilon
\quad (\epsilon\in\mathbb{R})
\label{gauge-0}
\end{equation}
does not change the physical quantity $\bm{u}\in\mathcal{V}$.
Such a map is called a \emph{gauge transformation}.
We find that the map (\ref{gauge-0}) is a Hamiltonian flow generated by the constant of motion $N=\int_{M} \varrho$,
i.e. the map (\ref{gauge-0}) may be written as $(I+\epsilon J\partial_{\bm{\xi}}N)$.
Or, the co-adjoint orbit $\Ad_N^*(\epsilon)$ is a gauge-transformation group of the Clebsch parameterization.

The helicity $C$, which now reads
\begin{equation}
C=\int_{M} \wp\wedge\rmd\wp,
\label{helicity-2}
\end{equation}
yields a different gauge group $\Ad_C^*(\epsilon)$ (see\,\cite{TanehashiYoshida2015} for the explicit form the corresponding gauge transformation).

\begin{remark}
\label{remark:Casimir}
If we denote by $\{ F, G \}_{\mathcal{V}}$ the Poisson bracket of (\ref{canonical_Poisson_bracket})
evaluated only for observables $F$ and $G$ in which $\bm{\xi}$ appears in terms of the Clebsch parameterized $\bm{u}\in\mathcal{V}$,
we obtain
\begin{equation}
\{ G, N \}_{\mathcal{V}} =0, \quad
\{ G, C \}_{\mathcal{V}} =0
\quad \forall G.
\label{Casimir}
\end{equation}
Hence, $N$ and $C$ are the Casimir elements of the reduced Poisson algebra $C^\infty_{\{~,~\}_{\mathcal{V}}}(\mathcal{V})$.
See e.g.\,\cite{Marsden} for general discussion on \emph{reduction} of Poisson brackets.
The existence of Casimir elements is the characteristics of \emph{noncanonical Poisson brackets}\,\cite{Morrison-RMP,MG80}
We note that $C^\infty_{\{~,~\}_{\mathcal{V}}}(\mathcal{V})$ has much more (in fact, infinitely many) 
topological constraints (constants of motions) that are not integrable, i.e., do not define Casimir elements (see \cite{YoshidaMorrisonFDR2014}).
\end{remark}

\subsection{Two-dimensional system and generalized enstrophy}
\label{subsec:enstrophy}

In a 2D system (${M} = T^2$), we can parameterize a general 2D velocity as
\begin{equation}
\bm{V} \Leftrightarrow  \wp = \rmd\varphi + \check{p} \rmd q  .
\label{Clebsch-2D}
\end{equation}
Now only three Clebsch parameters $\varphi$, $\check{p}$, and $q$ suffice\,\cite{Yoshida2009}.
Hence, the phase space is 
\begin{equation}
Z = \{ \bm{\zeta}= (\varphi,\varrho, q, p)^{\mathrm{T}};
~\varphi, q \in \mathrm{\Omega}^0(T^2),~\varrho, p\in \mathrm{\Omega}^2(T^2) \}.
\label{2D_phase_space}
\end{equation}
All other formalisms are the same as the case of 3D systems.
However, because the 3-form $\wp\wedge\rmd\wp$ cannot be defined in 2D space 
we do not have the helicity conservation law.

As mentioned in Sec.\,\ref{subsec:2D},   a different constant of motion emerges in 2D,
the generalized enstrophy, which is a spatial (2D) integral of a 2-form
that involves  the vorticity $\omega=\rmd \wp$.
As a preparation for the development in the next section, we
re-formulate the generalized enstrophy in the language of differential geometry
(with a slight extension), and  re-prove its conservation.
Let 
\begin{equation}
{Q} = \int_{\Sigma(t)} f(\omega^*/\varrho^*) \varrho ,
\label{enstrophy-1}
\end{equation}  
where $f$ is an arbitrary smooth function, and $\Sigma(t)\subset{M}$ is a co-moving `volume' (in fact, a 2D surface).
Notice that the integral is evaluated on a subset $\Sigma(t)$ that is moved by the group-action of $\mathrm{e}^{t\bm{v}}$.

By the following Lemma\,\ref{lemma:ratio} and the mass conservation law $\widetilde{\mathcal{L}}_{\bm{v}} \varrho = 0$, we find
\begin{eqnarray}
\frac{\rmd}{\rmd t} {Q}
&=& \int_{\Sigma(t)} \widetilde{\mathcal{L}}_{\bm{v}}[ f(\omega^*/\varrho^*) \varrho]
\nonumber
\\
&=& \int_{\Sigma(t)} \left[ f' \varrho \widetilde{\mathcal{L}}_{\bm{v}} (\omega^*/\varrho^*) + f \widetilde{\mathcal{L}}_{\bm{v}}\varrho \right] = 0.
\label{enstrophy-2D}
\end{eqnarray}

\begin{lemma}
\label{lemma:ratio}
Let $\alpha$ and $\beta$ be a pair of  $n$-forms defined on a smooth manifold $M$ of dimension $n$.
Denoting $\alpha=\alpha^*\vol^n$ and $\beta=\beta^*\vol^n$,
we define $\vartheta = \alpha^*/\beta^*$. 
If $\widetilde{\mathcal{L}}_{\bm{V}} \alpha = 0$ and $\widetilde{\mathcal{L}}_{\bm{V}}\beta=0$ for a vector $\bm{V} \in TM$,
then $\widetilde{\mathcal{L}}_{\bm{V}} \vartheta = 0$.
\end{lemma}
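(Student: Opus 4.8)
The plan is to exploit the fact that on an $n$-manifold the bundle of $n$-forms has one-dimensional fibres, so the two top-forms $\alpha$ and $\beta$ are everywhere proportional. Concretely, writing $\alpha=\alpha^*\vol^n$ and $\beta=\beta^*\vol^n$ and assuming $\beta^*$ is nowhere zero (which is implicit in forming the quotient $\vartheta=\alpha^*/\beta^*$), we have the pointwise identity $\alpha=\vartheta\,\beta$. The lemma then reduces to showing that the scalar $\vartheta$ is transported by the flow, i.e.\ $\widetilde{\mathcal{L}}_{\bm{V}}\vartheta=0$.

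First I would record that the extended operator $\widetilde{\mathcal{L}}_{\bm{V}}=\partial_t+\mathcal{L}_{\bm{V}}$ is a degree-zero derivation of the (time-parametrised) exterior algebra: $\mathcal{L}_{\bm{V}}$ obeys the graded Leibniz rule and $\partial_t$ obeys the ordinary one, so their sum is again a derivation. Applying it to the product $\alpha=\vartheta\,\beta$ (scalar times top-form) therefore gives
\begin{equation}
\widetilde{\mathcal{L}}_{\bm{V}}\alpha=(\widetilde{\mathcal{L}}_{\bm{V}}\vartheta)\,\beta+\vartheta\,\widetilde{\mathcal{L}}_{\bm{V}}\beta .
\label{lemma-leibniz}
\end{equation}
Invoking the two hypotheses $\widetilde{\mathcal{L}}_{\bm{V}}\alpha=0$ and $\widetilde{\mathcal{L}}_{\bm{V}}\beta=0$ collapses (\ref{lemma-leibniz}) to $(\widetilde{\mathcal{L}}_{\bm{V}}\vartheta)\,\beta=0$.

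To finish, note that $\widetilde{\mathcal{L}}_{\bm{V}}\vartheta$ is itself a $0$-form, so $(\widetilde{\mathcal{L}}_{\bm{V}}\vartheta)\,\beta$ is an $n$-form with coefficient $(\widetilde{\mathcal{L}}_{\bm{V}}\vartheta)\,\beta^*$. Since $\beta^*\neq0$ everywhere, this coefficient vanishes only if $\widetilde{\mathcal{L}}_{\bm{V}}\vartheta=0$ pointwise, which is the claim. The only points needing genuine care are the derivation (Leibniz) property of $\widetilde{\mathcal{L}}_{\bm{V}}$ across the scalar--top-form product, particularly the contribution of the explicit $\partial_t$ term, which must be checked to act on $\vartheta$ and $\beta$ separately, and the nowhere-vanishing of $\beta$, which both makes the ratio $\vartheta$ well defined and licenses the final cancellation; the remaining steps are purely algebraic.
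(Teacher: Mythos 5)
Your proof is correct, but it takes a genuinely different route from the paper's. The paper decomposes both top-forms against the fixed volume form, $\alpha=\alpha^*\vol^n$ and $\beta=\beta^*\vol^n$, uses $\mathcal{L}_{\bm{V}}\vol^n=(\mathrm{div}\,\bm{V})\,\vol^n$ to convert the hypotheses into continuity-type equations $\widetilde{\mathcal{L}}_{\bm{V}}\alpha^*=-\alpha^*\,\mathrm{div}\,\bm{V}$ and $\widetilde{\mathcal{L}}_{\bm{V}}\beta^*=-\beta^*\,\mathrm{div}\,\bm{V}$, and then applies the quotient rule to $\alpha^*/\beta^*$, where the two divergence source terms cancel. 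You instead decompose $\alpha$ against $\beta$ itself, writing $\alpha=\vartheta\,\beta$ (legitimate because the fibres of $\Lambda^n T^*M$ are one-dimensional and $\beta^*$ is nowhere zero), apply the derivation property of $\widetilde{\mathcal{L}}_{\bm{V}}=\partial_t+\mathcal{L}_{\bm{V}}$ on a $0$-form times top-form product, and conclude from $(\widetilde{\mathcal{L}}_{\bm{V}}\vartheta)\,\beta=0$ that $\widetilde{\mathcal{L}}_{\bm{V}}\vartheta=0$. Your version is shorter and more intrinsic: the volume form and $\mathrm{div}\,\bm{V}$ never enter, so the argument manifestly does not depend on the choice of $\vol^n$, and it makes explicit the nonvanishing hypothesis on $\beta^*$ that the paper's division by $\beta^*$ and $\beta^{*2}$ uses only tacitly. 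What the paper's computation buys in exchange is the physically meaningful intermediate statement: the coefficient of a transported $n$-form obeys the continuity equation, which is exactly the form in which the lemma is applied to the mass form $\varrho$ and the vorticity-type forms in the enstrophy theorems. Both arguments rest on the same Leibniz property of $\widetilde{\mathcal{L}}_{\bm{V}}$, applied to different factorizations.
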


\begin{proof}
By the definition, 
\[
\widetilde{\mathcal{L}}_{\bm{V}}\alpha = (\widetilde{\mathcal{L}}_{\bm{V}} \alpha^*)\vol^n + \alpha^* (\widetilde{\mathcal{L}}_{\bm{V}}\vol^n) = (\widetilde{\mathcal{L}}_{\bm{V}} \alpha^*)\vol^n + \alpha^* (\mathrm{div}\,\bm{V}) \vol^n.
\]
When $\widetilde{\mathcal{L}}_{\bm{V}}\alpha=0$, we may write $\widetilde{\mathcal{L}}_{\bm{V}} \alpha^* = -\alpha^* \mathrm{div}\,\bm{V} $.
The same formula applies to $\widetilde{\mathcal{L}}_{\bm{V}} \beta^*$. 
We thus have
\[
\widetilde{\mathcal{L}}_{\bm{V}} \left(\frac{\alpha^*}{\beta^*}\right)
=\frac{\widetilde{\mathcal{L}}_{\bm{V}}\alpha^*}{\beta^*} - \frac{\alpha^* \widetilde{\mathcal{L}}_{\bm{V}}\beta^*}{\beta^{*2}}
=  \frac{-\alpha^* \mathrm{div}\,\bm{V}}{\beta^*} + \frac{\alpha^* \beta^* \mathrm{div}\,\bm{V}}{\beta^{*2}} =0. 
\]
\qed
\end{proof}

We want to generalize ${Q}$ to a class of 3D systems by considering a  3-form integral of the form 
\begin{equation}
{Q} = \int_{V(t)} f(\vartheta) \varrho ,
\label{enstrophy-3}
\end{equation}
for  some scalar $\vartheta$ that reflects $\omega$.

\section{Epi-two dimensional flow}
\label{sec:epi-2D}
\subsection{Reduction of the phase space}
\label{subsec:epi-2D}
A thought, drawn from the foregoing observation, is that the degeneration of 
one constant of motion (i.e. the helicity) must be compensated by a new constant of motion
(i.e. the generalized enstrophy).
Although we have seen that the degeneracy of the helicity is usual for 2D systems,
it may occur in a more general situation.
Then, it is conceivable that the compensation should also occur simultaneously.
If so, a generalized enstrophy may exist as a topological constraint in a wider class of ideal flows,
which we call \emph{epi-2D flows}.

\begin{definition}[epi-2D flow]
Let $Y$ be a phase space of smooth Clebsch parameters such that
\begin{equation}
Y = \{ \bm{\eta}=(\varphi,\varrho,q,p)^{\mathrm{T}} ;~\varphi, q \in \mathrm{\Omega}^0(T^3),~\varrho, p\in \mathrm{\Omega}^3(T^3) \} .
\label{epi-2D-1}
\end{equation}
The corresponding physical fields $\rho\Leftrightarrow \varrho^*$ and
\begin{equation}
\bm{V} \Leftrightarrow  \wp =  \rmd\varphi + \left(\frac{p^*}{\varrho^*}\right) \rmd q 
\label{epi-2D-2}
\end{equation}
are called epi-two-dimensional (epi-2D) flows.
\end{definition}

Notice the difference between $Y$ of (\ref{epi-2D-1}) and $Z$ of (\ref{2D_phase_space}); in particular,  the epi-2D flows are defined on the 3D domain $T^3$.
The reduced phase space $Y$ is a closed subset of $X$.
We denote by $\langle~,~\rangle_Y$ the reduced pairing of $Y^*$ and $Y$ (cf.\,(\ref{pairing})).
By restricting observables in $Y$, we define a canonical Poisson algebra $C^\infty_{\{~,~\}_Y}(Y)$,
where the Poisson bracket is
\begin{equation}
\{ F, G \}_Y = \langle \partial_{\bm{\eta}} F, J_Y \partial_{\bm{\eta}} G \rangle_Y,
\quad  
J_Y = J_c\oplus J_c .
\label{reduced_symplectic}
\end{equation}


Epi-2D flow may have a finite vorticity $\rmd\wp=\rmd\check{p}\wedge\rmd q$,
where $\check{p}=p^*/\varrho^*$.
However, we observe
\[
\wp\wedge\rmd\wp = \rmd\varphi\wedge \rmd\check{p}\wedge\rmd q = \rmd(\varphi\wedge \rmd\check{p}\wedge\rmd q ),
\]
i.e. the helicity density $\wp\wedge\rmd\wp$ is an exact 3-form.  Hence, we have 

\begin{proposition}
\label{prop:zero-helicity}
Epi-2D flow has zero helicity, i.e. $C=\int_{{M}} \wp\wedge\rmd\wp =0$.
\end{proposition}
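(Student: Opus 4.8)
The plan is to show that the helicity density $\wp\wedge\rmd\wp$ is a globally exact $3$-form on $M=T^3$, and then to conclude $C=0$ immediately from Stokes' theorem, using that $T^3$ is a closed manifold (compact without boundary, so $\partial T^3=\emptyset$). First I would compute the vorticity $2$-form from the epi-2D parameterization $\wp=\rmd\varphi+\check{p}\,\rmd q$ (with $\check{p}=p^*/\varrho^*$). Since $\rmd\rmd=0$ and in particular $\rmd(\rmd q)=0$, the first term drops out and one gets $\rmd\wp=\rmd\check{p}\wedge\rmd q$.

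Next I would expand $\wp\wedge\rmd\wp=(\rmd\varphi+\check{p}\,\rmd q)\wedge(\rmd\check{p}\wedge\rmd q)$ term by term. The crucial observation is that the contribution carrying the factor $\check{p}\,\rmd q$ contains $\rmd q\wedge\rmd\check{p}\wedge\rmd q$, which vanishes because the $1$-form $\rmd q$ appears twice. This is precisely the place where the single-Clebsch-pair structure of epi-2D flow is essential: in the full $3$D representation $\wp=\rmd\varphi+\check{p}\,\rmd q+\check{s}\,\rmd r$ the analogous cross terms $\check{p}\,\rmd q\wedge\rmd\check{s}\wedge\rmd r$ and $\check{s}\,\rmd r\wedge\rmd\check{p}\wedge\rmd q$ do \emph{not} vanish and are exactly what generate nonzero helicity. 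Here, by contrast, only the single term $\rmd\varphi\wedge\rmd\check{p}\wedge\rmd q$ survives.

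Finally I would recognize this surviving term as exact. Since $\rmd\check{p}\wedge\rmd q$ is closed, the Leibniz rule gives $\rmd(\varphi\,\rmd\check{p}\wedge\rmd q)=\rmd\varphi\wedge\rmd\check{p}\wedge\rmd q=\wp\wedge\rmd\wp$. Integrating over $M=T^3$ and applying Stokes' theorem then yields $C=\int_{T^3}\rmd(\varphi\,\rmd\check{p}\wedge\rmd q)=0$. As for obstacles: the algebra is entirely routine, so the only point genuinely demanding care is \emph{global} well-definedness of the primitive. I would verify that $\varphi$, $q$, and $\check{p}$ are honest globally defined smooth objects on $T^3$ — guaranteed by $\varphi,q\in\mathrm{\Omega}^0(T^3)$ together with the smoothness of $\check{p}=p^*/\varrho^*$ (which requires $\varrho^*\neq0$) — so that $\varphi\,\rmd\check{p}\wedge\rmd q$ is a bona fide global $2$-form and Stokes' theorem legitimately applies. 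Were the Clebsch potentials only locally defined or multivalued, exactness could fail and a nonzero period could appear; it is exactly the standing assumption $\bm{\eta}\in Y$ with single-valued smooth potentials that makes the argument go through.
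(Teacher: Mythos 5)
Your proposal is correct and follows essentially the same route as the paper: compute $\rmd\wp=\rmd\check{p}\wedge\rmd q$, observe that the cross term dies because $\rmd q$ appears twice, recognize $\wp\wedge\rmd\wp=\rmd\varphi\wedge\rmd\check{p}\wedge\rmd q=\rmd(\varphi\,\rmd\check{p}\wedge\rmd q)$ as exact, and integrate over the closed manifold $T^3$. Your additional remarks on the global single-valuedness of the Clebsch potentials and the condition $\varrho^*\neq 0$ are sound elaborations of hypotheses the paper leaves implicit in the definition of $Y$, but they do not change the argument.
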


A \emph{vortex line} is a curve determined by
\begin{equation}
\frac{\rmd}{\rmd \tau} \bm{x} = \bm{\omega}(\bm{x}),
\label{vortex_line}
\end{equation}
where $\bm{\omega}$ is the vorticity.
For epi-2D flow,
$\bm{\omega}=\nabla\check{p}\times\nabla q~(\Leftrightarrow  \rmd\wp=\rmd\check{p}\wedge\rmd q)$.
Evidently, the level-sets of $\check{p}$ and $q$ are the `integral surfaces' of
vortex lines:
\begin{eqnarray*}
& &
\frac{\rmd}{\rmd \tau} \check{p}(\bm{x}(\tau)) = \nabla\check{p}\cdot\frac{\rmd}{\rmd \tau} \bm{x} =
\nabla\check{p}\cdot\bm{\omega}(\bm{x})=0,
\\
& &
\frac{\rmd}{\rmd \tau} q(\bm{x}(\tau)) = \nabla q\cdot\frac{\rmd}{\rmd \tau} \bm{x} \, =
\nabla q\cdot\bm{\omega}(\bm{x})=0.
\end{eqnarray*}
This well-known fact can be stated as

\begin{proposition}
\label{prop:integrable}
The vortex line equation of epi-2D flow is integrable;
two Clebsch parameters $\check{p}$ and $q$ define the integral surfaces.
We call the surface spanned by $\rmd\check{p}$ and $\rmd q$ the \emph{vortex surface}.
\end{proposition}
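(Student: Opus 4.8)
The plan is to exhibit the two scalar functions $\check{p}$ and $q$ as independent first integrals of the vortex-line ODE $\frac{\rmd}{\rmd\tau}\bm{x}=\bm{\omega}(\bm{x})$, so that every vortex line is confined to the intersection of a level set of $\check{p}$ with a level set of $q$; the resulting invariant $2$-surfaces are exactly the integral (vortex) surfaces asserted in the statement.

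First I would record the special form of the vorticity for epi-2D flow. Taking the exterior derivative of the reduced Clebsch $1$-form $\wp=\rmd\varphi+\check{p}\,\rmd q$ and using $\rmd(\rmd\varphi)=0$ gives $\rmd\wp=\rmd\check{p}\wedge\rmd q$, which in $3$D vector notation reads $\bm{\omega}=\nabla\check{p}\times\nabla q$. This is the structural fact that distinguishes epi-2D flow: the presence of a single Clebsch pair $(\check{p},q)$ forces $\bm{\omega}$ to be the cross product of two gradients, and this is what drives everything else.

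Next I would verify conservation of $\check{p}$ and $q$ along a vortex line by the chain rule: $\frac{\rmd}{\rmd\tau}\check{p}(\bm{x}(\tau))=\nabla\check{p}\cdot\dot{\bm{x}}=\nabla\check{p}\cdot(\nabla\check{p}\times\nabla q)=0$, the scalar triple product vanishing because it contains a repeated factor, and identically $\frac{\rmd}{\rmd\tau}q(\bm{x}(\tau))=\nabla q\cdot(\nabla\check{p}\times\nabla q)=0$. Equivalently, in the language of forms, $\iota_{\bm{\omega}}\rmd\check{p}=\iota_{\bm{\omega}}\rmd q=0$, so the vortex vector field lies in the common kernel of $\rmd\check{p}$ and $\rmd q$. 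Hence $\check{p}$ and $q$ are constant on each orbit, the orbit is contained in $\{\check{p}=c_1\}\cap\{q=c_2\}$, and each individual level set $\{\check{p}=\text{const}\}$ (likewise $\{q=\text{const}\}$) is an invariant surface swept out by vortex lines — precisely the integrable/vortex-surface conclusion.

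Since the computation is immediate, the only point requiring care is regularity: the level sets are honest smooth $2$-surfaces and the two first integrals are functionally independent exactly where $\nabla\check{p}$ and $\nabla q$ are linearly independent. Where they are parallel one has $\bm{\omega}=\nabla\check{p}\times\nabla q=0$, i.e.\ a zero of the vortex-line field, and the foliation degenerates there; I would therefore phrase integrability on the open set of nondegenerate vorticity and note that the degenerate locus consists of equilibrium points of the vortex-line flow. I do not expect any genuine obstacle beyond this bookkeeping, since the entire content is the triple-product identity forced by the epi-2D Clebsch form.
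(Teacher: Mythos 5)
Your proof is correct and follows essentially the same route as the paper: compute $\rmd\wp=\rmd\check{p}\wedge\rmd q$ (i.e.\ $\bm{\omega}=\nabla\check{p}\times\nabla q$) and then use the vanishing triple products $\nabla\check{p}\cdot\bm{\omega}=\nabla q\cdot\bm{\omega}=0$ to show $\check{p}$ and $q$ are first integrals of the vortex-line equation, so their level sets are the integral (vortex) surfaces. Your closing remark on functional independence and the degenerate locus $\bm{\omega}=0$ is a careful addition the paper leaves implicit, but it does not change the argument.
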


The epi-2D flow generated by a reduced Hamiltonian 
\begin{equation}
H(\bm{\eta}) = \int_{M} \left[ \frac12 {\big|\rmd\varphi + (p^*/\varrho^*) \rmd q \big|^2 }  + \varepsilon(\varrho^*)  \right]\,\varrho
\label{reduces-Hamiltonian}
\end{equation}
satisfies the 3D fluid equations (\ref{mass_conservation}) and (\ref{momentum}).

We may observe the epi-2D dynamics in the larger phase space $X$.  
Since the reduced Hamiltonian $H(\bm{\eta})$ does not include the variables $r$ and $s$,
the flow velocity $\bm{V}$ is independent of $r$ and $s$.
However, they obey the same equations (\ref{H-2}) and (\ref{H-3}), i.e.
\begin{equation}
\widetilde{\mathcal{L}}_{\bm{V}} r = 0,
\quad
\widetilde{\mathcal{L}}_{\bm{V}} s = 0.
\label{phantom}
\end{equation}
Such fields, co-moving with the epi-2D flow,
are called \emph{phantoms}\,\cite{YoshidaMorrisonFDR2014,YoshidaMorrison2016}.
Every functional such as $F(r,s)$ is a constant of motion:
$\{F(r,s), H(\bm{\eta}) \}=0$.

\subsection{Generalized enstrophy of epi-2D flow}
\label{subsec:g-enstrophy_epi-2D}
In light of the above, it is not surprising that we  have a family of conservation laws for epi-2D fluid motion:

\begin{theorem}
\label{theorem:enstropy_conservation}
Let $\bm{\eta}(t)$ be an epi-2D flow generated by the reduced Hamiltonian $H(\bm{\eta})$ of (\ref{reduces-Hamiltonian}),
and $r(t)$ be a co-moving phantom.
We define a generalized enstrophy
 (denoting $\omega=\rmd \check{p}\wedge \rmd q$)
\begin{equation}
{Q} = \int_{\Omega(t)} f(\vartheta)\, \varrho,
\quad \vartheta = \frac{(\omega \wedge \rmd r)^*}{\varrho^*}
\label{enstrophy-2}
\end{equation}
with an arbitrary sooth function $f$ and an arbitrary co-moving 3D volume element $\Omega(t)\subset{M}$.
Then, $\rmd {Q}/\rmd t =0$.
\end{theorem}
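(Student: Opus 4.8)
The plan is to follow the same route as the 2D computation in (\ref{enstrophy-2D}), with the vorticity 2-form $\omega$ replaced by the 3-form $\omega\wedge\rmd r$ so that the integrand becomes a top-form on the 3D co-moving region $\Omega(t)$. First I would invoke the transport theorem for a co-moving domain (the same fact implicit in (\ref{enstrophy-2D})): since $\Omega(t)=\rme^{t\bm{V}}\Omega(0)$, moving $\rmd/\rmd t$ inside the integral replaces it by the extended Lie derivative $\widetilde{\mathcal{L}}_{\bm{V}}$ acting on the integrand,
\[
\frac{\rmd}{\rmd t}{Q}=\int_{\Omega(t)}\widetilde{\mathcal{L}}_{\bm{V}}\big[f(\vartheta)\,\varrho\big].
\]
Applying the Leibniz rule and using mass conservation $\widetilde{\mathcal{L}}_{\bm{V}}\varrho=0$ (the first equation of (\ref{H-3})), the problem reduces to showing $\widetilde{\mathcal{L}}_{\bm{V}}\vartheta=0$.

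To obtain $\widetilde{\mathcal{L}}_{\bm{V}}\vartheta=0$ I would appeal to Lemma\,\ref{lemma:ratio} with $\alpha=\omega\wedge\rmd r$ and $\beta=\varrho$, both top-forms on $T^3$, so that $\vartheta=\alpha^*/\beta^*$. The hypothesis $\widetilde{\mathcal{L}}_{\bm{V}}\beta=0$ is again mass conservation, so the remaining task is $\widetilde{\mathcal{L}}_{\bm{V}}(\omega\wedge\rmd r)=0$. Since $\widetilde{\mathcal{L}}_{\bm{V}}=\partial_t+\mathcal{L}_{\bm{V}}$ is a degree-zero derivation that commutes with $\rmd$ (both $\partial_t$ and, via Cartan's formula, $\mathcal{L}_{\bm{V}}$ do), I would expand by Leibniz,
\[
\widetilde{\mathcal{L}}_{\bm{V}}(\omega\wedge\rmd r)=(\widetilde{\mathcal{L}}_{\bm{V}}\omega)\wedge\rmd r+\omega\wedge\rmd(\widetilde{\mathcal{L}}_{\bm{V}}r),
\]
and treat the two factors separately.

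The phantom factor is immediate: $\widetilde{\mathcal{L}}_{\bm{V}}r=0$ by (\ref{phantom}), so $\rmd(\widetilde{\mathcal{L}}_{\bm{V}}r)=0$. The main point, which I expect to be the crux, is the frozen-in vorticity $\widetilde{\mathcal{L}}_{\bm{V}}\omega=0$. I would establish it by computing $\widetilde{\mathcal{L}}_{\bm{V}}\wp$ from the equations of motion: writing $\wp=\rmd\varphi+\check{p}\,\rmd q$ and using (\ref{H-1})--(\ref{H-2}) together with $\widetilde{\mathcal{L}}_{\bm{V}}\check{p}=0$ (which follows from Lemma\,\ref{lemma:ratio} applied to $\check{p}=p^*/\varrho^*$, since $p$ and $\varrho$ are both Lie-dragged by (\ref{H-3})), one finds $\widetilde{\mathcal{L}}_{\bm{V}}\wp=\rmd(h-V^2/2)$, an exact 1-form. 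Hence $\widetilde{\mathcal{L}}_{\bm{V}}\omega=\widetilde{\mathcal{L}}_{\bm{V}}\rmd\wp=\rmd\,\widetilde{\mathcal{L}}_{\bm{V}}\wp=\rmd\rmd(h-V^2/2)=0$. With both factors accounted for, $\widetilde{\mathcal{L}}_{\bm{V}}(\omega\wedge\rmd r)=0$; Lemma\,\ref{lemma:ratio} then yields $\widetilde{\mathcal{L}}_{\bm{V}}\vartheta=0$, and the Leibniz expansion of the transport integral vanishes term by term, giving $\rmd{Q}/\rmd t=0$. The essential new ingredient over the 2D argument is precisely the $\rmd r$ factor supplied by the phantom, which promotes the frozen 2-form $\omega$ to a frozen top-form on the 3D domain.
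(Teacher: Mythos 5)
Your proposal is correct and takes essentially the same route as the paper's proof: reduce to $\widetilde{\mathcal{L}}_{\bm{V}}\vartheta=0$ using mass conservation, expand $\widetilde{\mathcal{L}}_{\bm{V}}(\omega\wedge\rmd r)$ by the Leibniz rule, kill the phantom term via (\ref{phantom}), and conclude with Lemma~\ref{lemma:ratio} applied to $\alpha=\omega\wedge\rmd r$, $\beta=\varrho$. The only difference is that you explicitly verify the frozen-in property $\widetilde{\mathcal{L}}_{\bm{V}}\omega=0$ (through $\widetilde{\mathcal{L}}_{\bm{V}}\check{p}=0$ and $\widetilde{\mathcal{L}}_{\bm{V}}\wp=\rmd(h-V^2/2)$), a step the paper's displayed computation drops silently.
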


\begin{proof}
Since $\widetilde{\mathcal{L}}_{\bm{V}}\varrho=0$, what we have to prove is $\widetilde{\mathcal{L}}_{\bm{V}} \vartheta =0$.
We have
\begin{eqnarray*}
\widetilde{\mathcal{L}}_{\bm{V}} (\rmd \check{p}\wedge \rmd q \wedge \rmd r)
&=& (\widetilde{\mathcal{L}}_{\bm{V}} \omega ) \wedge \rmd r +
\omega\wedge \widetilde{\mathcal{L}}_{\bm{V}} \rmd r
\\
&=& \omega\wedge \rmd (\widetilde{\mathcal{L}}_{\bm{V}} r) =0.
\end{eqnarray*}
By Lemma\,\ref{lemma:ratio},
we obtain $\widetilde{\mathcal{L}}_{\bm{V}} \vartheta =0$.
\qed
\end{proof}

The generalized enstrophy (\ref{enstrophy-2}) is a 3-dimensional generalization of the 
two-dimensional one (\ref{enstrophy-1}).
About its application, we have the following remarks:

\begin{remark}
\label{remark:appl1}
In a general 3D flow, ${Q}$ is also a constant of motion.
However, it does not characterize the vorticity, since $\omega$ must be inflated to
$\rmd\wp=\rmd \check{p}\wedge \rmd q + \rmd \check{s}\wedge \rmd r$.
\end{remark}

\begin{remark}
\label{remark:appl2}
In the case of 2D flow, we may first immerse the system into 3D by adding a perpendicular coordinate $z$ (see Sec.\,\ref{subsec:2D}),
and take the phantom $r=z$ (this $r$ is stationary).   
Then, Theorem\,\ref{theorem:enstropy_conservation} reproduces the
result of Sec.\,\ref{subsec:enstrophy}.
\end{remark}

\begin{remark}
\label{remark:appl3}
Suppose that $\omega\neq 0$.  Then, we may choose the initial value of $r$ so that
\begin{equation}
(\rmd \check{p}\wedge \rmd q \wedge \rmd r)^* = 
\frac{D(\check{p}, q, r)}{D(x,y,z)} \neq 0.
\label{jacobian}
\end{equation}
Hence, the generalized enstrophy can be made nontrivial.
Analogous to  the 2D generalized enstrophy, such an $r$ is a coordinate co-moving with the fluid,
which   penetrates the vortex surface (see Proposition\,\ref{prop:integrable}).
\end{remark}

\section{A particle picture}
\label{sec:particle}

\subsection{Epi-2D `particles'}
\label{subsec:particle}

We can exploit \emph{local}  epi-2D regions in order to define particle-like behavior. 
In the general 3D parameterization
$\bm{V} \Leftrightarrow  \wp = \rmd\varphi + \check{p}\rmd q + \check{s}\rmd r$,
a region in which $\check{s}=0$ may be called a  \emph{epi-2D domain}.
Since $\check{s}$ co-moves with the fluid,
every infinitesimal volume element (denoted  by $\Omega_j(t)$ with $j$ an index for each such volume element)
included in a \emph{epi-2D domain} may be viewed as a {quasiparticle},
which we call an \emph{epi-2D particle}.
The generalized enstrophy evaluated for $\Omega_j(t)$,
which we denote by ${Q}_1(\Omega_j)$ is a constant of motion, characterizing  the vorticity included there.
We call ${Q}_1(\Omega_j)$ the \emph{charge} of the {epi-2D particle} $\Omega_j$.

A symmetric epi-2D particle can be defined by a domain in which $\check{p}=0$, with the 
corresponding generalized enstrophy given by 
\[
{Q}_2(\Omega_j) = \int_{\Omega_j(t)} f(\vartheta_2)\,\varrho,
\quad
\vartheta_2 = \frac{(\omega_2\wedge\rmd q)^*}{\varrho^*},
\]
measuring the vorticity $\omega_2 = \rmd\check{s}\wedge\rmd r$.

As noted in Remark\,\ref{remark:appl1}, both
${Q}_1$ and ${Q}_2$ can be evaluated in a general co-moving domain (particle) $\Omega(t)\subset{M}$,
and they are ubiquitous constants.
However, they do not represent the `enstrophy' of an actual vorticity when the vorticity exists in a mixed state
$\rmd \check{p}\wedge\rmd q + \rmd\check{s}\wedge\rmd r$.
Hence, we may interpret ${Q}_1$ and ${Q}_2$ as  `potential' quantities, which
become `observable' when one of $\omega_1=\rmd \check{p}\wedge\rmd q$ or $\omega_2=\rmd\check{s}\wedge\rmd r$
alone occupies a domain. 

\subsection{Discovering epi-2D particle}
\label{subsec:discover}

In the preceding subsection, the notion of an epi-2D particle (or domain) was introduced using the Clebsch parameters which are the potential fields lying beneath the observables. Here we make an attempt to discover an epi-2D particle only from the  physical variable  $\bm{u}$.

We start by remembering the well-known relation:
\begin{lemma}[Frobenius]
\label{lemma:Frobenius}
Let $\wp$ be a $C^1$-class 1-form on a smooth manifold $M$ of dimension $n$ ($\geq 3$).
The following two conditions are equivalent:
\begin{enumerate}
\item
$\wp$ has \emph{zero helicity density}, i.e., 
\begin{equation}
\wp \wedge \rmd \wp = 0.
\label{zero-helicity-1}
\end{equation}

\item
$\wp$ is locally (i.e., in a neighborhood $\Omega$ of every point of $M$)
\emph{integrable}, i.e.,
there exist two scalars $\alpha$ and $\beta$ by which $\wp$ can be written as
\begin{equation}
\wp = \alpha \rmd \beta.
\label{zero-helicity-2}
\end{equation}
Then, the Pfaffian equation $\wp=0$ foliates $\Omega$ by the level-sets of $\beta$.
\end{enumerate}
\end{lemma}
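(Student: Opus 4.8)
The plan is to prove the two implications separately; the substance lies entirely in $(1)\Rightarrow(2)$, while the converse is a one-line computation. For $(2)\Rightarrow(1)$: if $\wp=\alpha\,\rmd\beta$ on a neighborhood, then $\rmd\wp=\rmd\alpha\wedge\rmd\beta$, and hence $\wp\wedge\rmd\wp=\alpha\,\rmd\beta\wedge\rmd\alpha\wedge\rmd\beta=0$ since $\rmd\beta\wedge\rmd\beta=0$. For $(1)\Rightarrow(2)$, I first dispose of the trivial case: near a point where $\wp$ vanishes identically one may take $\alpha=0$, so it suffices to work in a neighborhood $\Omega$ of a point where $\wp\neq0$. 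There the Pfaffian equation $\wp=0$ defines a smooth codimension-one distribution $\mathcal{D}=\ker\wp\subset TM$, and the goal is to show this distribution is a level-set foliation.

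The key step is to translate the algebraic hypothesis $(1)$ into \emph{involutivity} of $\mathcal{D}$. For any two vector fields $X,Y$ that are sections of $\mathcal{D}$, so that $\wp(X)=\wp(Y)=0$, the Cartan formula gives
\[
\rmd\wp(X,Y)=X\,\wp(Y)-Y\,\wp(X)-\wp([X,Y])=-\wp([X,Y]).
\]
On the other hand, the algebraic content of $\wp\wedge\rmd\wp=0$ for a nonvanishing $\wp$ is that $\rmd\wp=\theta\wedge\wp$ for some $1$-form $\theta$; I would verify this by completing $\wp$ to a local coframe and checking which components of $\rmd\wp$ survive in the wedge product $\wp\wedge\rmd\wp$, a short linear-algebra computation. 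This representation forces $\rmd\wp(X,Y)=0$ whenever $X,Y\in\mathcal{D}$, and combining the two facts yields $\wp([X,Y])=0$, i.e.\ $[X,Y]$ is again a section of $\mathcal{D}$. Thus $\mathcal{D}$ is involutive.

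The final step invokes the Frobenius theorem proper: an involutive codimension-one distribution is integrable, so (after possibly shrinking $\Omega$) there are coordinates in which the integral manifolds are the level sets of a single coordinate function $\beta$, giving $\mathcal{D}=\ker\rmd\beta$. Since $\wp$ and $\rmd\beta$ annihilate the same hyperplane field at each point, they are pointwise proportional; choosing a vector field $Z$ transversal to $\mathcal{D}$ (so $\rmd\beta(Z)\neq0$), the factor $\alpha:=\wp(Z)/\rmd\beta(Z)$ is a ratio of smooth functions with nonvanishing denominator, hence smooth, and $\wp=\alpha\,\rmd\beta$ as claimed. The assertion that $\wp=0$ foliates $\Omega$ by the level sets of $\beta$ is then merely a restatement of $\mathcal{D}=\ker\rmd\beta$.

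The main obstacle is the existence-of-integral-manifolds core of the Frobenius theorem, that is, passing from the infinitesimal involutivity condition to an actual foliation. This is the genuinely analytic ingredient and is entirely standard; I would either cite it or prove it by the usual induction on $n$, flowing along a local frame of vector fields spanning $\mathcal{D}$ that can be arranged to commute. A secondary subtlety worth flagging is the behavior at zeros of $\wp$: the clean foliation picture presumes $\wp\neq0$, which is precisely why the statement is phrased \emph{locally} and why one restricts attention to such a neighborhood.
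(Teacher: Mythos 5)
The paper offers no proof of this lemma at all: it is recalled as ``the well-known relation'' (the classical Frobenius/Pfaff integrability theorem) and used as a black box, so there is no in-paper argument to compare yours against. Your proof is the standard textbook one and its core is sound: the direction $(2)\Rightarrow(1)$ by direct computation; the direction $(1)\Rightarrow(2)$ by showing that $\wp\wedge\rmd\wp=0$ with $\wp\neq0$ forces $\rmd\wp=\theta\wedge\wp$, hence involutivity of $\ker\wp$ via the Cartan formula, hence integrability by the Frobenius theorem for distributions, and finally pointwise proportionality $\wp=\alpha\,\rmd\beta$ with $\alpha=\wp(Z)/\rmd\beta(Z)$ smooth. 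Delegating the analytic core (involutivity implies existence of integral manifolds) to the standard Frobenius theorem is a legitimate reduction.

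There is, however, one genuine logical gap: your opening dichotomy for $(1)\Rightarrow(2)$ --- either $\wp$ vanishes identically near the point, or $\wp\neq0$ there --- is not exhaustive. It omits points where $\wp$ vanishes but does not vanish identically on any neighborhood, and at such points the conclusion can genuinely fail. For example, in $\mathbb{R}^3$ the form $\wp=x\,\rmd y-y\,\rmd x$ satisfies $\rmd\wp=2\,\rmd x\wedge\rmd y$ and $\wp\wedge\rmd\wp=0$, yet it admits no representation $\alpha\,\rmd\beta$ on any neighborhood of the origin: away from the $z$-axis the integral surfaces of $\ker\wp$ are the half-planes of constant azimuthal angle, all of which meet the $z$-axis, so any continuous $\beta$ constant on each of them is constant on a whole neighborhood, forcing $\rmd\beta=0$ where $\wp\neq0$. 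You do flag this subtlety in your closing remark, but the correct resolution is that the lemma as stated (``a neighborhood of \emph{every} point of $M$'') is only true under the additional hypothesis that $\wp$ is nonvanishing at the point in question --- a hypothesis implicit in the paper's usage, since it applies the lemma to identify vortex surfaces where the flow is nontrivial. Your argument is complete and correct precisely for those points; it does not, and cannot, cover the others.
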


We define a quotient space
\begin{equation}
V_s = \mathrm{\Omega}^1({M}) / \rmd \mathrm{\Omega}^0({M}),
\label{solenoidal}
\end{equation}
which may be identified as the space of \emph{solenoidal vector fields}.
If we identify a 1-form $\wp\in V_s$ with a 3-vector field $\bm{V}$,
the integral
$\int_{M} \wp\wedge\rmd\wp$ evaluates the helicity 
$C=\int_{M}\bm{V}\cdot\nabla\times\bm{V}\,\rmd^3x$.
Notice that the transformation $\bm{V} \mapsto\bm{V}+\nabla\phi$ ($\forall\phi$) does not change the helicity,
while the integrand (helicity density) $\bm{V}\cdot\nabla\times\bm{V}$ is modified.
But by defining the helicity density $ \wp\wedge\rmd\wp$ on $V_s$, the gauge ($\rmd \phi$) dependence has been removed.

If $\wp\wedge\rmd\wp= 0$ in $\Omega\subset{M}$, we say that $\wp$ is `helicity free' in $\Omega$.
By Lemma\,\ref{lemma:Frobenius},
a helicity-free $\wp$ can be represented as
$\wp = \check{p}\rmd q $ ($\exists \check{p},\,q$) in some $\Omega' \subset \Omega$,
which implies that $\wp' = \wp + \rmd \varphi $ ($\exists \varphi$) can be identified 
with the flow velocity $\bm{V}$ in $\Omega'$.
To put it in another way, we have

\begin{proposition}
\label{prop:integrability}
Given that the projection onto $V_s$ of a flow velocity $\bm{V}$ is helicity free in $\Omega\subset{M}$,
i.e.,  there exists $\nabla \phi$ by which we can make $\bm{V}-\nabla\phi \sim  \wp \in V_s$ such that 
$\wp\wedge\rmd\wp =0$ in $\Omega$,
then such a $\bm{V}$ is epi-2D in some $\Omega'\subset\Omega$.
\end{proposition}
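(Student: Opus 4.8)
The plan is to reduce the claim directly to the Frobenius theorem (Lemma~\ref{lemma:Frobenius}), exploiting that the hypothesis hands us exactly the zero-helicity-density condition that Frobenius requires. By assumption there is a scalar $\phi$ (with $\nabla\phi \Leftrightarrow \rmd\phi$) such that the solenoidal representative $\wp = \bm{V} - \rmd\phi \in V_s$ satisfies $\wp\wedge\rmd\wp = 0$ throughout $\Omega$. The strategy is then threefold: first invoke Frobenius to obtain a local integrating representation $\wp = \check{p}\,\rmd q$; second, add back the gauge term $\rmd\phi$ to recover $\bm{V}$; and third, recognize the resulting expression as the epi-2D parameterization (\ref{epi-2D-2}).

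First I would fix a point $x_0\in\Omega$ at which $\wp\neq 0$ and restrict to a connected neighborhood $\Omega'\subset\Omega$ on which $\wp$ is nowhere vanishing; such an $\Omega'$ exists by continuity whenever $\wp\not\equiv 0$, and the case $\wp\equiv 0$ is trivial, since then $\bm{V}\Leftrightarrow\rmd\phi$ is already epi-2D with $\check{p}=0$. On $\Omega'$ the condition $\wp\wedge\rmd\wp=0$ together with $\wp\neq 0$ lets me apply Lemma~\ref{lemma:Frobenius}, producing scalars which I rename $\check{p}$ and $q$ so that $\wp=\check{p}\,\rmd q$ on $\Omega'$.

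Next I would reconstruct the velocity $1$-form. Since $\wp$ is the projection of $\bm{V}$ onto $V_s = \mathrm{\Omega}^1({M})/\rmd\mathrm{\Omega}^0({M})$, we have the identification $\bm{V} \Leftrightarrow \wp + \rmd\phi = \rmd\phi + \check{p}\,\rmd q$ on $\Omega'$. Setting $\varphi=\phi$, this is precisely the epi-2D form (\ref{epi-2D-2}); the normalization $\check{p}=p^*/\varrho^*$ of the definition is recovered merely by defining the $n$-form $p$ through the ambient density via $p^*=\check{p}\,\varrho^*$. Hence $\bm{V}$ is epi-2D on $\Omega'$, as claimed.

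The only genuine subtlety I anticipate is the \emph{local} nature of the Frobenius representation: the factor $\check{p}$ and the potential $q$ are guaranteed only in a neighborhood of a non-degenerate point, which is exactly why the conclusion asserts epi-2D structure in some $\Omega'\subset\Omega$ rather than on all of $\Omega$, and why the vanishing locus of $\wp$ must be excised. A secondary point worth stating cleanly is gauge invariance: because the helicity density $\wp\wedge\rmd\wp$ is well defined on the quotient $V_s$ (it is unchanged under $\wp\mapsto\wp+\rmd\psi$), the hypothesis is independent of the particular choice of $\phi$, and the residual freedom in $\rmd\phi$ is harmlessly absorbed into $\rmd\varphi$ at the final step.
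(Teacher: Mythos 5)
Your proof is correct and takes essentially the same route as the paper: the paper likewise invokes the Frobenius lemma (Lemma~\ref{lemma:Frobenius}) to write $\wp = \check{p}\,\rmd q$ in some $\Omega'\subset\Omega$ and then absorbs the exact part $\rmd\phi$ into $\rmd\varphi$ to recognize the epi-2D parameterization (\ref{epi-2D-2}). Your added care about the vanishing locus of $\wp$ and the normalization $p^*=\check{p}\,\varrho^*$ are details the paper passes over silently, but they do not alter the argument.
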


\section{Conclusion}
\label{sec:conclusion}

Diverse structures generated in fluids are not attributed to features of some  nontrivial energy functional.
In fact, the energy of a usual fluid is quite  simple, it being the equivalent a  \emph{norm} on  phase space 
of physical variables such as that given by (\ref{energy}). 
This is in marked contrast to the usual situation in condensed-matter physics where, 
for example, phase transitions or spinodal decompositions are modeled by bumpy energies.
The key role for fluids is, then, played by `constraints' that forbid the dynamics from obeying  simple orbits that might be determined by the energy alone. In the ideal (no-dissipation) limit, such constraints are manifested as conservation laws.
Indeed, ideal fluid mechanics has infinitely many such constants of motion, and some of them are essential for controlling  bifurcations of diverse structures or maintaining stability of some vortical motion.
In the present work, we focused on two well-known constants of motion:  the helicity of 3D flow and the (generalized) enstrophy of 2D flow, and we studied the basic mechanism of their creation.

In physics, a constant of motion is expected to be the product of some \emph{symmetry}.
However, the energy (Hamiltonian) of the fluid, represented in terms of the usual  physical  (Eulerian) variables, 
does not bear such symmetries to produce the helicity or enstrophy.
Therefore, we are led to consider a set of underlying basic parameters beneath the physical quantities,
and assume that some specific combinations of them appear as observables.
Here, we invoked Clebsch parameters, and showed that the helicity is the product of gauge symmetry of the Clebsch parameterization.  

We have observed that the phase space $X$ of general 3D flows is hierarchically foliated into submanifolds,  
where the smallest subsystem hosts vorticity-free (irrotational) flows. 
The next hierarchy $Y$ hosts the epi-2D flows, which is a subset of the zero-helicity leaf (Proposition\,\ref{prop:zero-helicity}). 
The subsystem $Y$ is foliated by the generalized enstrophy
(Theorem\,\ref{theorem:enstropy_conservation}),  which is a 3D generalization of the conventional one for  2D systems.
Notice that the reduction from $X$ (general 3D flow) to $Y$ (epi-2D flow) is not a geometrical constraint (cf.\ the 2D system of Sec.\,\ref{subsec:enstrophy}). 
However, epi-2D systems have  intrinsic vortex surfaces (Proposition\,\ref{prop:integrable}), 
which parallels the \emph{a priori} base space of the 2D system.
The generalized enstrophy is the measure of the circulation on such vortex surfaces.

\begin{acknowledgement}
ZY acknowledges discussions with Professor Y. Fukumoto and Y. Kimura.
The work of ZY was supported by JSPS KAKENHI Grant Number 23224014 and 15K13532, 
while that of PJM was supported by USDOE contract DE-FG02-04ER-54742. 
\end{acknowledgement}



\end{document}